\documentclass{article}
\usepackage{setspace}
\usepackage[colorlinks,linkcolor=black,citecolor=black,urlcolor=black,%
            bookmarks=false]{hyperref}
\usepackage[ansinew]{inputenc}
\usepackage{hyperref}
\usepackage[british]{babel}
\usepackage[reqno]{amsmath}
\usepackage{amsthm,amsfonts,amssymb}
\usepackage[mathscr]{euscript}
\usepackage{graphicx,color}
\usepackage{multicol,enumerate,fancyhdr,emptypage,xspace,subfig}
\usepackage{makeidx}
\usepackage[ruled,vlined]{algorithm2e}

\newtheorem{proposition}{Proposition}[section]
\newtheorem{cor}{Corollary}[section]

\newtheorem{definition}{Definition}[section]
\newtheorem{theorem}{Theorem}[section]
\newtheorem{remark}{Remark}[section]
\newtheorem{example}{Example}[section]

\newcommand{\Z}{{\mathbb Z}}
\newcommand{\ZZ}{\Z_{p^r}}

\newcommand{\FF}{\mathbb{F}_p}

\newcommand{\C}{{\mathcal C}}

\usepackage{amssymb}

\makeindex

\newcommand{\vu}{{{\bf u}}}
\newcommand{\vv}{{{\bf v}}}

\begin{document}

\title{An Algebraic Invariant for Free Convolutional Codes over Finite Local Rings }

\author{Mohammed El Oued}

\maketitle

\begin{abstract}

This paper investigates the algebraic structure of free convolutional codes over the finite local ring $\mathbb{Z}_{p^r}$. We introduce a new structural invariant, the Residual Structural Polynomial, denoted by $\Delta_p(\mathcal{C}) \in \mathbb{F}_p[D]$. We construct this invariant via encoders which are reduced internal degree matrix (RIDM). We formally demonstrate that $\Delta_p(\mathcal{C})$ is an intrinsic characteristic of the code, invariant under equivalent RIDMs.  A central result of this work is the establishment that $\Delta_p(\mathcal{C})$ serves as an algebraic   criterion for intrinsic catastrophicity: we prove that a free code $\mathcal{C}$ admits a non-catastrophic realization if and only if $\Delta_p(\mathcal{C})$ is a monomial of the form $D^s$. Furthermore, we establish a fundamental duality theorem, proving that $\Delta_p(\mathcal{C}) = \Delta_p(\mathcal{C}^\perp)$. This result reveals a deep structural symmetry, showing that the  "catastrophicity" of a free code is preserved under orthogonality.

\end{abstract}

\section{Introduction}

Convolutional codes over finite fields have been extensively studied since the seminal work of Forney \cite{for,for1} and Massey \cite{jl}. In the classical framework, a convolutional code $\mathcal{C}$ is a sub-module of $(\mathbb{F}_q((D)))^n$ generated by a $k \times n$ polynomial matrix $G(D)$. A key concept in this theory is catastrophicity: an encoder is called catastrophic if an input sequence of infinite Hamming weight results in a codeword of finite Hamming weight. Over a field, this property is linked to the minors of the generator matrix: an encoder is non-catastrophic if and only if the greatest common divisor (GCD) of its $k \times k$ minors is of the form $D^s$ for some $s \in \mathbb{N}$, \cite{jl}.

The transition from fields to the finite ring $\mathbb{Z}_{p^r}$ introduces profound algebraic challenges.  Notably, there exist codes where every encoder is catastrophic (see \cite{mit},\cite{f}). In such cases, catastrophicity is an intrinsic structural defect of the code itself.

The primary objective of this paper is to formalize this intrinsic behavior through a new polynomial invariant. We introduce the \textit{Residual Structural Polynomial}, denoted by $\Delta_p(\mathcal{C})$. This invariant is defined using \textit{Reduced Internal Degree Matrices}(RIDM), which provide a minimal representation of the code. 

Our main contributions are as follows. First, we prove that $\Delta_p(\mathcal{C})$ is a well-defined intrinsic characteristic of any free convolutional code over $\mathbb{Z}_{p^r}$. Second, we establish that this polynomial provides a definitive criterion for catastrophicity: a code is  non catastrophic if and only if $\Delta_p(\mathcal{C}) = D^s$ for some $s \geq 0$. Finally, we present a  duality theorem, proving that $\Delta_p(\mathcal{C}) = \Delta_p(\mathcal{C}^\perp)$. This equality reveals that the "catastrophicity" is a symmetric property shared between a code and its dual.

\section{Preliminaries results}
This section establishes the foundational algebraic concepts and notations used throughout this paper. We also present the necessary results required to address the problems in the subsequent sections. Throughout this paper, $p$ denotes a prime number and $r$ a positive integer. We consider the finite local ring $\mathbb{Z}_{p^r}$. For any element $a \in \mathbb{Z}_{p^r}$, we denote by $\overline{a}$ the natural projection (reduction modulo $p$) onto the residue field $\mathbb{F}_p$. This projection is extended to the ring of polynomials $\mathbb{Z}_{p^r}[D]$, the field of rational fractions $\mathbb{Z}_{p^r}(D)$, and the ring of Laurent series $\mathbb{Z}_{p^r}((D))$  in the usual way. \\

\subsection{Convolutional codes over a finite ring}
A large part of the existing literature on convolutional codes over rings utilizes the framework of semi-infinite Laurent series to represent code sequences \cite{fa,z,ha,ma}. We adopt this approach in our definition and focus on free codes.
\begin{definition}\rm\medskip 
\noindent A free convolutional code $\mathcal{C}$ of length $n$ and rank $k$ over $\mathbb{Z}_{p^r}$ is a free sub-module of $\mathbb{Z}_{p^r}((D))^n$ generated by the rows of a $k \times n$ polynomial matrix $G(D) \in \mathbb{Z}_{p^r}[D]^{k \times n}$ of full rank:
\begin{eqnarray*}
 \C &=& Im_{\ZZ((D))}G(D)\\
&=& \{\vv(D)\in\ZZ((D))^n/\exists \vu(D)\in\ZZ((D))^k\ :\vv(D)=\vu(D) G(D)\}.
\end{eqnarray*}
\end{definition}

The matrix $G(D)$ is called a generator matrix (or encoder) for $\mathcal{C}$.\\

Two generator matrices $G_1(D), G_2(D) \in \mathbb{Z}_{p^r}[D]^{k \times n}$ are equivalent  if they generate the same code. This is equivalent to the existence of a $k \times k$ invertible matrix $M(D)$ over $\mathbb{Z}_{p^r}(D)$ such that $G_1(D) = M(D)G_2(D)$ \cite{z}.\\

\begin{definition}\cite{z,f}\rm\medskip

A generator matrix $G(D)\in \ZZ[D]^{k\times n}$ is defined to be  non-catastrophic if any infinite weight input $\vu(D)\in\ZZ((D))^k$ cannot produce a finite weight output, i.e.,
$$\mbox{wt}[\vu(D)G(D)]<+\infty\Longleftrightarrow \mbox{wt}[\vu(D)]<+\infty,$$
where the Hamming weight wt$[\vu(D)]$ of a sequence    $\vu(D)=(\vu_1(D),\ldots,\vu_k(D))\in\ZZ((D))^k$ is defined as the sum of the weights of its components : $$ \mbox{wt}[\vu(D)]=\displaystyle\sum_{i=1}^k \mbox{wt}[\vu_i(D)]$$ where $ \mbox{wt}[\vu_i(D)]$ represents the number of nonzero coefficients of the series $\vu_i(D)$.
\end{definition}

  In contrast to convolutional codes over fields, catastrophicity is an inherent property of codes defined over $\ZZ$ \cite{mit,z}. Specifically, there exist convolutional codes over $\ZZ$
 for which every polynomial encoder is catastrophic. Such codes are termed  catastrophic codes. For instance, the code over $\Z_4$  generated by the matrix $\left(
                                                             \begin{array}{cc}
                                                               1+D & 3+D \\
                                                             \end{array}
                                                           \right)$
is a catastrophic code.

\subsection{Reduced integral degree matrices}
We introduce now a class of generator matrices who play a crucial role in this work.
\begin{definition}\rm\medskip
 We define the internal degree of a generator matrix $G(D) \in \mathbb{Z}_{p^r}[D]^{k \times n}$, denoted as $\text{intdeg}(G)$, as the maximum of the degrees of all $k \times k$ minors of $G(D)$:
\begin{equation}
\text{intdeg}(G) = \text{maximum degree of}\; G(D)'s\; k\times k \; \text{minors}\;\cite{rj}.
\end{equation}
\end{definition}
 
\begin{definition}\rm\medskip
A generator matrix $G(D)$ is called a {\bf Reduced Internal Degree Matrix} (RIDM) if its internal degree is minimal among all polynomial generator matrices equivalent to $G(D)$.
\end{definition}
The existence of a RIDM for any free code is guaranteed by the fact that the internal degree  is a non-negative integer. We show next how construct a RIDM from any full row rank polynomial matrix. For this, we need the following classical properties for polynomial and polynomial matrices over the ring $\ZZ$.
\begin{proposition}\label{p1}\rm\medskip
Let $G(D)\in\ZZ^{k\times n}[D]$. If an irreducible  polynomial $P$ divides all $k\times k$ minors of $G$, then there exists a unimodular matrix $T(D)$ such that $P$ divides  one row of $T(D)G(D)$.
\end{proposition}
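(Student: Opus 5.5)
The plan is to pass to the quotient ring $R=\ZZ[D]/(P)$ and do Gaussian elimination there using only operations that lift to unimodular matrices over $\ZZ[D]$. Write $\pi:\ZZ[D]\to R$ for the canonical map and $\overline{G}=\pi(G)\in R^{k\times n}$. Since $P$ divides every $k\times k$ minor of $G$, and minors commute with ring homomorphisms, every $k\times k$ minor of $\overline{G}$ is $0$ in $R$. The goal becomes: find a product $\overline{T}$ of elementary transvections $I+aE_{ij}$ ($a\in R$, $i\neq j$) and row permutations such that $\overline{T}\,\overline{G}$ has a zero row. Both kinds of matrix lift to $\ZZ[D]$ as matrices of determinant $\pm1$: lift $a$ to any polynomial, and a permutation lifts to itself. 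So the lifted product $T(D)$ is unimodular and $\pi(T G)=\overline{T}\,\overline{G}$. A zero row of $\overline{T}\,\overline{G}$ means exactly that $P$ divides that row of $T(D)G(D)$.

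The elimination step is where the structure of $R$ enters, and it is the main obstacle. Since $p$ is nilpotent in $\ZZ$, $R$ is a domain only when $p\in(P)$. Given irreducibility of $P$, this means $P$ is, up to a unit, $p$ itself, and then $R\cong\FF[D]$.

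In that case I would run row-reduction over the fraction field $\FF(D)$. Vanishing of all maximal minors says the rows of $\overline{G}$ are linearly dependent over $\FF(D)$. Hence a Euclidean-style elimination on $\FF[D]$ kills a row: repeatedly subtract polynomial multiples of the row with lowest-degree pivot (a division step), and permute rows as needed. This is exactly the Euclidean algorithm on a column, followed by induction on the remaining columns. It uses only transvections and permutations, and it terminates with a zero row because the $\FF(D)$-rank is less than $k$.

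If instead $P$ is a basic irreducible polynomial (monic, with $\overline{P}$ irreducible in $\FF[D]$), then $R$ is the Galois ring $GR(p^r,\deg P)$, a finite chain ring. Over a chain ring I would bring $\overline{G}$ to Smith form $\mathrm{diag}(p^{e_1},\dots,p^{e_k})$ up to units. Here I have to argue that some $p^{e_i}=0$; the vanishing of the product $p^{\sum e_i}$ alone does not force this. For example, $\mathrm{diag}(p,p)$ over $\Z_{p^2}$ has zero determinant but no zero row after any invertible row operation.

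So the first thing I would try is to check how the paper intends \lq\lq irreducible\rq\rq. If it means a prime element of $\ZZ[D]$ dividing the minors after passage to $\FF[D]$, then the argument reduces to the domain case above. Otherwise the statement needs the extra hypothesis that $R$ is a domain (equivalently, a field or $\FF[D]$), or a weaker conclusion such as \lq\lq $P$ divides $p^{j}$ times a row.\rq\rq{} I expect the intended application (building an RIDM) to use $P$ of the form $D$, $p$, or a basic irreducible. I would state the domain case cleanly, and handle the chain-ring case by first factoring out the maximal power of $p$ common to the relevant minors.

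Finally, within the elimination I avoid scaling rows by units of $R$. Units of $R$ need not lift to units of $\ZZ[D]$, so the procedure uses only transvections and permutations, which keeps $T(D)$ genuinely unimodular.
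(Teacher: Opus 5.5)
The paper gives no proof of this proposition; it is quoted as a ``classical property''. So the only question is whether your argument establishes it.

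Your domain case is sound. For $P$ an associate of $p$, $R\cong\FF[D]$ is Euclidean. Echelon reduction by transvections and permutations leaves a zero row, because the $\FF(D)$-rank of $\overline{G}$ is less than $k$. These operations lift to matrices of determinant $\pm1$. (For $r=1$ your step ``$p\in(P)$ forces $P\sim p$'' is wrong, since $p=0$; but then $R$ is a field and the same elimination works.)

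The obstruction you found in the chain-ring case is not a weakness of your method: the statement is false, and you should say so rather than hedge. Over $\mathbb{Z}_{p^2}[D]$ take
\[
G(D)=\begin{pmatrix} D+p & 0\\ 0 & D+p\end{pmatrix},\qquad \det G=(D+p)^2=D\,(D+2p).
\]
This $G$ has full rank. Take $P=D$, which is irreducible: from $D=fg$ one gets $D=\overline{f}\,\overline{g}$, so one factor has nonzero constant reduction and is a unit. If $T$ is unimodular, $\det T(0)$ is the constant term of a unit of $\mathbb{Z}_{p^2}[D]$, so $T(0)\in GL_2(\mathbb{Z}_{p^2})$. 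Now $D$ divides row $i$ of $TG$ if and only if row $i$ of $T(0)G(0)=p\,T(0)$ vanishes, that is, if and only if row $i$ of $T(0)$ lies in $p\,\mathbb{Z}_{p^2}^2$. That would put $\det T(0)$ in $p\,\mathbb{Z}_{p^2}$, a contradiction. This is exactly your $\mathrm{diag}(p,p)$, realized by a legitimate full-rank $G$.

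The same argument applies to $G=\mathrm{diag}(P+p^{r-1},\,P+p)$, for which $P\mid\det G$ and $G\equiv\mathrm{diag}(p^{r-1},p)$ modulo $P$. It refutes the statement for every $r\ge 2$ and every regular irreducible $P$. It uses only that $\ZZ[D]/(P)$ is a free $\ZZ$-module of positive rank, which follows from Theorem~\ref{tech}. This covers irreducibles such as $D^2+2\in\mathbb{Z}_4[D]$, which are neither associates of $p$ nor basic irreducible and are missing from your case split.

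Hence the only provable case is $P\sim p$. That is precisely the case that never occurs where the paper uses the proposition: in the characterization of RIDMs and in Step~3 of the algorithm, $Q$ divides the regular GCD of the minors and is therefore regular.

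Your suggested repairs do not help. In the example the minors are regular, so there is no power of $p$ to factor out. The weaker conclusion ``$P$ divides $p^j$ times a row'' is vacuous for $j=r$, and it does not permit the row division of Step~5.

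Note also that the second factorization $\det G=(D+p)^2$ supplies an irreducible, $D+p$, that does divide both rows. Since factorization into irreducibles is not unique in $\ZZ[D]$, whatever the paper really needs must let the irreducible be chosen depending on $G$. That is a different statement from the one given.
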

We recall that a matrix $T(D)\in\ZZ[D]^{k\times k}$ is  unimodular  if $\det M$ is a unit in $\ZZ[D]$.  Note that, $T(D)$ is unimodular if and only if its projection $\overline{M}(D)$ is unimodular over $\Z_p[D].$
\begin{theorem}{[Theorem XIII.6\cite{mc}}]\label{tech}\normalfont

 Let $P$ be a regular polynomial in $\Z_{p^r}[D]$. Then there is a monic polynomial $P_1$ with $\overline{P}=\overline{P_1}$ and a unit polynomial $P_2$ in $\Z_{p^r}[D]$ such that $P=P_1P_2.$

\end{theorem}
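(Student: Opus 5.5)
The plan is a Hensel-type lifting argument along the $p$-adic filtration of $\ZZ$, using that $p$ is nilpotent ($p^r=0$). Regularity of $P$ (it is not a zero divisor) means some coefficient of $P$ is a unit, i.e. $\overline{P}\neq 0$ in $\FF[D]$. Write $d=\deg \overline{P}$ and let $\bar c\in\FF^{*}$ be the leading coefficient of $\overline{P}$, so that $\overline{P}=\bar c\,\bar f$ with $\bar f$ monic of degree $d$. Recall that a polynomial in $\ZZ[D]$ is a unit exactly when its reduction mod $p$ is a nonzero constant: its constant term is a unit and all other coefficients are nilpotent. The goal is therefore a factorization $P=P_1P_2$ with $P_1$ monic of degree $d$ lifting $\bar f$, and $\overline{P_2}=\bar c$.

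The construction proceeds by induction on $j=1,\dots,r$. We build $F_j$ monic of degree $d$ and $G_j$ with $\overline{F_j}=\bar f$, $\overline{G_j}=\bar c$ and $P\equiv F_jG_j \pmod{p^j}$. For $j=1$, take $F_1$ to be any monic lift of $\bar f$ and $G_1$ any constant lift of $\bar c$. For the inductive step, write $P-F_jG_j=p^jH$ and seek
$$F_{j+1}=F_j+p^jA,\qquad G_{j+1}=G_j+p^jB,\qquad \deg A<d .$$
Since $2j\ge j+1$, it suffices to solve $H\equiv AG_j+F_jB \pmod p$, i.e. $\overline{H}=\bar c\,\overline{A}+\bar f\,\overline{B}$ in $\FF[D]$. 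Euclidean division of $\overline{H}$ by the monic polynomial $\bar f$ gives $\overline{H}=\bar f\,\bar q+\bar s$ with $\deg\bar s<d$. We then take $A$ a lift of $\bar c^{-1}\bar s$ (of degree $<d$) and $B$ a lift of $\bar q$. The condition $\deg A<d$ keeps $F_{j+1}$ monic of degree $d$. Since $j\ge1$, the reductions mod $p$ are unchanged, so $\overline{G_{j+1}}=\bar c$ is still a nonzero constant. At $j=r$ we get $P=F_rG_r$ exactly, and we set $P_1=F_r$, $P_2=G_r$, which is a unit by the characterization above.

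The only step needing care is the normalization of the reduction. The construction yields $\overline{P_1}=\bar f=\bar c^{-1}\overline{P}$, so $\overline{P}=\overline{P_1}$ holds literally exactly when $\overline{P}$ is monic. In general the equality must be read up to the unit constant $\bar c$, which is absorbed into $P_2$ (as in McDonald's formulation). The rest is routine. The key point, and the only potential obstacle, is keeping $\deg A<d$ during the lifting, which division by the monic $\bar f$ guarantees. No coprimality hypothesis is needed beyond the trivial fact that $\bar c$ is a unit.
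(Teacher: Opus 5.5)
The paper gives no proof of this statement, only a citation to McDonald, so there is no internal argument to compare with. Your proposal is the standard Hensel-lifting proof, and it is correct.

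The inductive step checks out. The term $p^{2j}AB$ vanishes modulo $p^{j+1}$, so everything reduces to $\overline{H}=\bar c\,\overline{A}+\bar f\,\overline{B}$ in $\mathbb{F}_p[D]$. Because the cofactor $\bar c$ is a nonzero constant, the B\'ezout step of the general Hensel lemma collapses to a single Euclidean division by the monic $\bar f$. Assigning the remainder to $A$ keeps $\deg A<d$, which preserves monicity. Nilpotence of $p$ then yields an exact factorization at $j=r$, and $G_r$ is a unit because $\overline{G_r}=\bar c$.

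Your normalization remark is also right, and it identifies a genuine inaccuracy in the statement as transcribed in the paper. A monic $P_1$ has monic reduction, so $\overline{P}=\overline{P_1}$ is impossible whenever the leading coefficient of $\overline{P}$ is not $1$; for instance, take $P=2$ in $\mathbb{Z}_9[D]$. The literal version holds only when $\overline{P}$ is monic, e.g.\ always for $p=2$. Properly read, the condition is $\overline{P_1}=\bar c^{-1}\overline{P}$. This is in fact automatic from $P=P_1P_2$ with $P_1$ monic and $P_2$ a unit, since $\overline{P_2}$ is then a nonzero constant.

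The correction is harmless for the paper. Its only use of the result, in the characterization of RIDMs, needs nothing beyond such a factorization.
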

A characterization of a RIDM is given in the next proposition.
\begin{proposition}\label{PP}\rm\medskip
A full-rank polynomial matrix $G(D) \in \mathbb{Z}_{p^r}[D]^{k \times n}$ is a RIDM if and only if the GCD of its $k \times k$ minors is equal to 1.
\end{proposition}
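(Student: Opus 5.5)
I would prove the two implications separately, using Proposition \ref{p1} and Theorem \ref{tech} for one direction and a comparison of minors under equivalence for the other. Throughout, ``GCD equal to 1'' is read up to units of $\ZZ[D]$. Equivalently, no irreducible polynomial divides all $k\times k$ minors of $G$.

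\emph{($\Rightarrow$).} Suppose $G$ is a RIDM, and suppose for contradiction that some irreducible non-unit $P$ divides every $k\times k$ minor of $G$.
\begin{enumerate}
\item By Proposition \ref{p1}, there is a unimodular $T(D)$ such that $P$ divides one row of $T(D)G(D)$. Write that row as $P\,g'(D)$.
\item $T(D)G(D)$ is equivalent to $G(D)$. Its minors are those of $G$ multiplied by the unit $\det T$, so its internal degree is unchanged.
\item Reduce the problem to monic $P$. If $P$ is regular, Theorem \ref{tech} gives $P=P_1P_2$ with $P_1$ monic, $\overline{P_1}=\overline P$ and $P_2$ a unit, so we may replace $P$ by $P_1$. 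If $P$ is not regular, then $\overline P=0$, so $P=pQ$ and all minors are divisible by $p$. I expect this is excluded by full rank, or reduces to $P=p$. Dividing a row by $p$ is not possible in $\ZZ$, so this case has to be argued away rather than handled by the division in the next step. I would argue that a free code of rank $k$ needs $\overline G$ of full rank; otherwise $G$ would not generate a free module of rank $k$.
\item Replace the row $P g'$ by $g'$ to get a matrix $G'$. It is equivalent to $G$ via $\mathrm{diag}(1,\dots,P,\dots,1)$, which is invertible over $\ZZ(D)$ because monic polynomials are non-zero-divisors.
\item Every minor of $G$ equals $P$ times the corresponding minor of $G'$. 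Since $P$ is monic of positive degree, degrees add: $\deg(PM)=\deg P+\deg M$. Hence $\text{intdeg}(G')<\text{intdeg}(G)$, contradicting minimality.
\end{enumerate}

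\emph{($\Leftarrow$).} Suppose the GCD of the $k\times k$ minors of $G$ is $1$, and let $G_1=M G$ be any equivalent polynomial matrix, with $M\in\ZZ(D)^{k\times k}$ invertible. The goal is $\text{intdeg}(G_1)\ge\text{intdeg}(G)$.
\begin{enumerate}
\item Clear denominators: $M=N/q$ with $N$ polynomial and $q$ a common denominator. Following Theorem \ref{tech}, the denominators of an invertible $M$ can be taken regular, hence, after absorbing units, monic.
\item By Cauchy--Binet, each minor $\Delta_I(G_1)=\det M\cdot\Delta_I(G)$. So $\det M\cdot\Delta_I(G)$ is polynomial for every $I$.
\item Write $\det M=a/b$ with $a,b$ polynomials. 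Because $\gcd_I\Delta_I(G)=1$, there is a combination $\sum_I c_I\Delta_I(G)=1$ with polynomial $c_I$. It follows that $\det M=\sum_I c_I\,\det M\,\Delta_I(G)$ is itself a polynomial.
\item Since $\det M$ is a polynomial non-zero-divisor, $\deg(\det M\cdot\Delta_I)\ge\deg\Delta_I$ for the minor of maximal degree. This holds after writing $\det M$ via Theorem \ref{tech} as monic times unit. Therefore $\text{intdeg}(G_1)\ge\text{intdeg}(G)$, and $G$ is a RIDM.
\end{enumerate}

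\emph{Main obstacle.} The hardest step is the ring-theoretic subtlety in two places: the B\'ezout identity (in general only $\gcd=1$ is given, not an ideal equal to $(1)$), and the degree behaviour of products over $\ZZ[D]$. For the B\'ezout issue I would first try to work modulo $p$. There $\overline{G}$ has coprime minors over the PID $\FF[D]$, which gives a B\'ezout identity over $\FF[D]$. I would then lift it, using that an element congruent to a unit modulo $p$ is a unit in $\ZZ[D]$. For degrees, I would consistently use Theorem \ref{tech} to replace regular polynomials by monic ones times units.
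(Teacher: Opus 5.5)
\paragraph{The gap.} Step~4 of your $(\Leftarrow)$ argument fails, and it cannot be repaired: for $r\ge 2$ the implication itself is false. The ring $\mathbb{Z}_{p^r}[D]$ has units of positive degree, e.g.\ $1+2D\in\mathbb{Z}_4[D]$ with $(1+2D)^2=1$. Multiplication by such a unit can lower degrees: $(1+2D)(D+2D^2)=D$. So writing $\det M$ as a monic polynomial times a unit gives no inequality $\deg(\det M\cdot\Delta_I)\ge\deg\Delta_I$.

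Concretely, take $G=(1+2D,\ D+2D^2)$ over $\mathbb{Z}_4$.
\begin{itemize}
\item One of its $1\times 1$ minors is the unit $1+2D$, so the GCD is $1$. The ideal of minors is even all of $\mathbb{Z}_4[D]$, so your B\'ezout step works and $\det M$ is a polynomial.
\item $\mathrm{intdeg}(G)=2$.
\item Yet $(1+2D)G=(1,D)$ is an equivalent encoder of internal degree $1$, so $G$ is not a RIDM.
\end{itemize}
The paper's proof of this direction breaks at the same point. It asserts that $\det M$ is a unit of $\mathbb{Z}_{p^r}$, but the GCD comparison can at best give a unit of $\mathbb{Z}_{p^r}[D]$, and for those the internal degrees need not agree. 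It also tacitly assumes $\det M$ is a polynomial. So $(\Leftarrow)$ cannot be proved as stated; it needs an extra hypothesis or a different notion of degree.

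\paragraph{The B\'ezout repair.} Your proposed fix via reduction mod $p$ is also wrong. Coprimality of the minors of $\overline{G}$ over $\mathbb{F}_p[D]$ is exactly the condition $\Delta_p(G)=1$. This fails for many matrices whose minors have GCD $1$; that is the whole point of the invariant. For $G=(2,D)$ over $\mathbb{Z}_4$ the GCD is $1$ (the only non-unit divisor of $2$ is $2$ itself), but $\overline{G}=(0,D)$. There the conclusion of your Step~3 fails outright: $M=(2+D)/D$ is not a polynomial, yet $MG=(2,\,2+D)$ is a polynomial encoder equivalent to $G$.

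\paragraph{The direction $(\Rightarrow)$.} Here you follow the paper: apply Proposition~\ref{p1}, then divide a row. Your normalization to a monic divisor via Theorem~\ref{tech} is more careful than the paper's step. The paper divides by an irreducible whose lowest coefficient is a unit, which need not lower the internal degree.

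However, your Step~2 claim is false for the reason above: multiplying all minors by the unit $\det T$ can change the internal degree (take $G=(1,D)$, $T=(1+2D)$). The fix is to replace $T$ by $\mathrm{diag}\bigl((\det T)^{-1},1,\dots,1\bigr)\,T$. This matrix has determinant $1$ and still has a row divisible by $P$, since scaling a row by a unit preserves divisibility.
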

\begin{proof}
Let $G(D)$ be a full row rank matrix in $\mathbb{Z}_{p^r}[D]^{k \times n}$.

 Suppose that the GCD of the $k \times k$ minors of $G$ is equal to a non-constant polynomial $P$. Since $G$ is a full row rank matrix, $P$ is a regular polynomial. By the Theorem \ref{tech}, $P$ admits an irreducible divisor $Q$ (where the lowest degree coefficient is a unit). By Proposition~\ref{p1}, there exists a unimodular matrix $T(D)$ such that one row of $T(D)G(D)$ is divisible by $Q$. By dividing this specific row by $Q$, the resulting matrix remains equivalent to $G(D)$ but has a strictly lower internal degree. This contradicts the minimality of of the internal degree of $G(D)$. Thus, the GCD of the $k \times k$ minors of any RIDM must be $1$.

 conversely, suppose that the GCD of the $k \times k$ minors of $G$ is $1$. Let $G'$ be a RIDM equivalent to $G$. It follows from the first part of this proof that the GCD of its $k \times k$ minors is also $1$. Since both matrices generate the same free module, there exists a non singular  matrix $M(D) \in \mathbb{Z}_{p^r}(D)^{k \times k}$ such that $G(D) = M(D)G'(D)$.  The $k \times k$ minors of $G(D)$ are equal to the $k \times k$ minors of $G'(D)$ multiplied by $\det M(D)$. Since the GCD of the minors of $G'(D)$ is 1, $\det M(D)$ must be the GCD of the minors of $G(D)$, which is 1. Therefore, $\det M(D)$ is a unit in $\mathbb{Z}_{p^r}$, which implies that the internal degree of $G(D)$ is equal to the internal degree of $G'(D)$. Thus, $G(D)$ is a RIDM.
\end{proof}
We now present an algorithm to transform any generator matrix $G(D)$ into an equivalent RIDM. This process works by reducing the GCD of the $k \times k$ minors within the ring $\mathbb{Z}_{p^r}[D]$ until it becomes 1.

\begin{center}
\textbf{Algorithm: Construction of an RIDM encoder}
\end{center}

\medskip
\noindent Let $G(D) \in \mathbb{Z}_{p^r}[D]^{k \times n}$ be a full row rank matrix. The following procedure constructs an equivalent Reduced Internal Degree Matrix (RIDM):

\begin{enumerate}
    \item[\textbf{Step 1.}] Compute $P(D)$ the greatest common divisor of all $k \times k$ minors of $G(D)$. 
    \item[\textbf{Step 2.}] If $P(D)$ is a unit (i.e., $GCD = 1$), then $G(D)$ is a RIDM. The algorithm terminates.
    \item[\textbf{Step 3.}] If $P(D)\neq1$, then $P$ is a non constant, non-unit polynomial. Let $Q(D)$ be an irreducible divisor of $P(D)$. 
    \item[\textbf{Step 4.}] Apply a unimodular transformation $T(D)$ to $G(D)$ such that one row of the product $T(D)G(D)$ is divisible by $Q(D)$. Let this row be $g_i(D)$.
    \item[\textbf{Step 5.}] Replace $g_i(D)$ by $\frac{1}{Q(D)}g_i(D)$ to obtain a new generator matrix $G'(D)$.
    \item[\textbf{Step 6.}] Repeat Steps 1 to 5 for $G'(D)$ until the GCD of the $k \times k$ minors is 1.
\end{enumerate}
The algorithm terminates when the GCD of the $k \times k$ minors equals 1, yielding a RIDM equivalent to the initial matrix.

\begin{remark}\rm\medskip
A fundamental distinction between convolutional codes over fields and those over rings lies in the existence of basic encoders. While every code over a field admits a basic generator matrix, a code over $\mathbb{Z}_{p^r}$ does not necessarily possess one. However, as demonstrated by the last Algorithm, such a code always admits a RIDM encoder. It is worth noting that when the RIDM condition is translated to the case of fields, it coincides with the standard definition of a basic matrix \cite{rj}. Over $\mathbb{Z}_{p^r}$, however, the two concepts diverge: a basic matrix must admit a polynomial left inverse, a property not guaranteed for all RIDMs.
\end{remark}
\section{The Residual Structural Polynomial}

In this section, we introduce the \textbf{Residual Structural Polynomial} as a fundamental invariant of a free convolutional code over $\ZZ$. We establish its construction from a polynomial generator matrix and  prove its independence from the choice of the Reduced Internal Degree Matrix (RIDM).

\subsection{ Definition and Invariance Theorem} 
We first define the polynomial for a specific matrix before showing it is an intrinsic property of the code.

\begin{definition}\rm\medskip
Let $G(D)$ be a RIDM over $\mathbb{Z}_{p^r}$. We define the \textbf{Residual Structural Polynomial} $\Delta_p(G)$ as the GCD of the $k \times k$ minors of the projected matrix $\overline{G}(D) \in \mathbb{F}_p[D]^{k \times n}$.

\end{definition}

\begin{remark}\rm\medskip
If two generator matrices $G_1(D)$ and $G_2(D)$ are related by a non-singular matrix  $M \in GL_k(\ZZ[D])$, we have:
\begin{equation}
\Delta_p(G_1) = \lambda \det(\overline{M}) \Delta_p(G_2)
\end{equation}
where $\lambda \in \FF \setminus \{0\}$ is a scalar such that the resulting polynomial is monic. In particular if $M$ is unimodular, we have $\Delta_p(G_1)=\Delta_p(G_2)$.
\end{remark}

The most critical property of this polynomial is its invariance. As shown in the next theorem, $\Delta_p(G)$ remains constant across all equivalent RIDM encoders. This confirms that the polynomial is an intrinsic property of the code $\mathcal{C}$, allowing us to denote it as $\Delta_p(\mathcal{C})$.

\begin{theorem}\label{imp}\normalfont
Let $G_1$ and $G_2$ be two equivalent reduced internal degree matrices, then $\Delta_p(G_1)=\Delta_p(G_2)$.
\end{theorem}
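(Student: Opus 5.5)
The plan is to compare the two projected matrices $\overline{G_1}$ and $\overline{G_2}$ directly. I will show that their $k\times k$ minors agree up to one common nonzero scalar in $\FF$. The GCDs, normalized to be monic, then coincide.

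\textbf{Step 1 (clearing denominators).} Since $G_1$ and $G_2$ are equivalent, there is a nonsingular $M(D)\in\ZZ(D)^{k\times k}$ with $G_1=M G_2$. Every entry of $M$ is a fraction whose denominator is a regular polynomial. Let $d$ be the product of these denominators; it is again regular. By Theorem~\ref{tech}, $d=P_1P_2$ with $P_1$ monic and $P_2$ a unit polynomial. Absorbing $P_2^{-1}$ into the numerator gives
$$M=\frac{1}{q}N,$$
with $N\in\ZZ[D]^{k\times k}$ and $q$ monic. This yields the polynomial identity $qG_1=NG_2$.

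\textbf{Step 2 (the determinant of $M$).} Let $m_1^{(J)}$ and $m_2^{(J)}$ denote the $k\times k$ minors of $G_1$ and $G_2$ on the column set $J$. Then $m_1^{(J)}=\det M\cdot m_2^{(J)}$ for every $J$. I then invoke the argument in the proof of Proposition~\ref{PP}: both matrices are RIDMs, so both families of minors have GCD $1$, and this forces $\det M$ to be a unit $u$ of $\ZZ[D]$. Consequently $\det N=q^k u$, and the projection $\overline{u}$ is a nonzero constant of $\FF$. This is the step I expect to be the main obstacle. $\ZZ[D]$ is not a UFD, so the GCD reasoning has to be made precise, and I will rely on Proposition~\ref{PP} as stated in the paper. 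If more care is needed, I would argue as follows. Write $\det M=a/b$ with $a$ regular and $b$ monic, so that $a\,m_2^{(J)}=b\,m_1^{(J)}$ for all $J$. Any irreducible factor of $b$ not dividing $a$ would divide every $m_1^{(J)}$, contradicting GCD $1$. The symmetric argument applies to the factors of $a$ and the minors $m_2^{(J)}$.

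\textbf{Step 3 (projection).} Reduce $qG_1=NG_2$ modulo $p$ to get $\overline{q}\,\overline{G_1}=\overline{N}\,\overline{G_2}$. Because $q$ is monic, $\overline{q}\neq 0$ in $\FF[D]$. Taking $k\times k$ minors gives
$$\overline{q}^{\,k}\,\overline{m_1^{(J)}}=\det\overline{N}\cdot\overline{m_2^{(J)}}=\overline{q}^{\,k}\,\overline{u}\,\overline{m_2^{(J)}}.$$
Since $\FF[D]$ is an integral domain and $\overline{q}\neq0$, we may cancel to obtain $\overline{m_1^{(J)}}=\overline{u}\,\overline{m_2^{(J)}}$ for every $J$, where $\overline{u}\in\FF\setminus\{0\}$.

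\textbf{Step 4 (conclusion).} The two families of projected minors differ by the same nonzero scalar. Hence their GCDs in $\FF[D]$ are associates, and after monic normalization $\Delta_p(G_1)=\Delta_p(G_2)$. The degenerate case is also covered: if all projected minors of one matrix vanish, the same holds for the other, and both invariants equal $0$.
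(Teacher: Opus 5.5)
\textbf{The gap is in Step~2.} Your proposal follows the paper's own argument: minors scale by $\det M$, coprimality of the minors forces $\det M$ to be a unit of $\ZZ[D]$, then project. You add the denominator-clearing and the cancellation. But the conclusion of Step~2 is false, and the paper's proof makes the same assertion without justification.

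Over $\Z_4$ take $G_2=(D,\;D+2)$ and $G_1=(D+2,\;D)$. Since $(D+2)^2=D^2$ in $\Z_4[D]$, we get $G_1=\frac{D+2}{D}\,G_2$, so the two are equivalent. Each has coprime entries: $D$ and $D+2$ are irreducible, because any factorization projects to $D=\overline{f}\,\overline{g}$, and they are not associates. Both are RIDMs. They have internal degree $1$, and a constant generator $(x,y)=c\,(D,D+2)$ would force $c=x/D$ and $2x=(y-x)D$. Hence $x\in\{0,2\}$ and $c$ is not invertible.

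Yet $\det M=(D+2)/D\notin\Z_4[D]$. Your fallback breaks exactly here, because irreducibles of $\ZZ[D]$ need not be prime: $D\mid (D+2)(D+2)$ while $D\nmid D+2$. (Also, since $b\,m_1^{(J)}=a\,m_2^{(J)}$, factors of $b$ should be played against the $m_2^{(J)}$, not the $m_1^{(J)}$.)

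\textbf{What is actually needed, and why coprimality cannot give it.} Regular polynomials have nonzero projection, so reduction mod $p$ is a ring homomorphism $\ZZ(D)\to\FF(D)$. This gives $\overline{m_1^{(J)}}=\overline{\det M}\;\overline{m_2^{(J)}}$ directly, making Steps~1 and~3 unnecessary. The equality $\Delta_p(G_1)=\Delta_p(G_2)$ then amounts to $\overline{\det M}\in\FF\setminus\{0\}$; in the example above, $\overline{\det M}=1$.

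This weaker statement cannot be extracted from coprimality of the minors, which is the only property of RIDMs your Step~2 uses. Consider $G'=(D^2+2,\;D^2+2D+2)=\frac{D^2+2}{D}\,(D,\;D+2)$. It also has coprime entries. Each entry is irreducible, because a product of two non-units projecting to $D^2$ has constant term $0$. They are not associates, because no unit multiple of $D^2+2$ has a nonzero $D$-coefficient. Yet $\overline{G'}=(D^2,D^2)$ while $\overline{G_2}=(D,D)$, so the projected GCDs differ.

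So the reasoning ``GCD $=1$ implies $\det M$ is a unit,'' borrowed from the proof of Proposition~\ref{PP}, is invalid. Indeed $G'$, of internal degree $2$, refutes the ``if'' half of that proposition. A correct proof must use the minimality of the internal degree itself to rule out a non-constant $\overline{\det M}$. Neither your proposal nor the paper supplies that step.
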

\begin{proof} Let $G_1(D)$ and $G_2(D)$ be two $k \times n$ matrices satisfying the conditions of the theorem. Then there is a $k \times k$ nonsingular matrix $M(D)$ over $\ZZ(D)$ such that $G_1(D) = M(D)G_2(D)$. This implies that the $k\times k$ minors of $G_1$ are equal to the $k\times k$ minors of $G_2$, each multiplied by $\det M$. By Proposition \ref{PP}, the GCD of the $k\times k$ minors for any RIDM is 1. This shows  that $\det M$ must be  a unit  in $\ZZ[D]$. This proves that $\Delta_p(G_1)=\Delta_p(G_2)$.

\end{proof}
Consequently, the following definition is well-established.
\begin{definition}\normalfont
Let $\C$ be a free convolutional code over $\ZZ$. if $G(D)$  is any reduced internal degree encoder for $\C$. The \textbf{Residual Structural Polynomial}  of the code is:  $$\Delta_p(\C)=\Delta_p(G).$$
\end{definition}

\subsection{Characterization of Intrinsic Catastrophicity} 
The polynomial $\Delta_p(\mathcal{C})$ provides a direct algebraic diagnosis of the catastrophicity of the code. Unlike codes over fields, catastrophicity over $\mathbb{Z}_{p^r}$ can be an unavoidable structural property.\\

A key characteristic of catastrophic polynomial generator matrices over $\ZZ$ lies in their projection over the residue field:
\begin{theorem}\cite{z}\label{j12}\normalfont\\
A polynomial matrix $G(D)$ over $\mathbb{Z}_{p^r}$ is catastrophic if and only if its projection $\overline{G}(D)$ is catastrophic over $\mathbb{F}_p$.
\end{theorem}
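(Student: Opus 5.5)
We prove the two directions separately, working coefficientwise modulo $p$.

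\emph{If $\overline{G}$ is catastrophic, then $G$ is catastrophic.} Take $\bar{\vu}(D)\in\FF((D))^k$ of infinite weight with $\bar{\vu}\,\overline{G}$ of finite weight. Lift it coefficientwise to $\vu_0\in\ZZ((D))^k$. Then $\vu_0 G$ has all but finitely many coefficients in $p\ZZ$, but it need not have finite weight. We correct the lift in stages. Write $\vu_0G=\vv_0+p\,\vw_1$, where $\vv_0$ is a polynomial vector and $\vw_1\in\ZZ((D))^n$. We would like to kill $\vw_1$ by adding $p\,\vu_1$ with $\overline{\vu_1}\,\overline{G}=-\overline{\vw_1}$ up to finite weight. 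This requires $\overline{\vw_1}$ to lie, up to a polynomial, in the row space of $\overline{G}$ over $\FF((D))$, and that is not automatic. This is the main obstacle.

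To get around it, we choose a better starting point than an arbitrary catastrophic input. Over $\FF$, catastrophicity means the GCD of the $k\times k$ minors of $\overline{G}$ has a factor $\bar q(D)$ with $\bar q(0)\neq 0$. Equivalently, after a unimodular $\overline{T}$ and a column permutation, some row of $\overline{T}\,\overline{G}$ is divisible by $\bar q$. Lift $\overline{T}$ to a unimodular $T$ over $\ZZ$, which is possible by the remark following Proposition~\ref{p1}. By the Remark on $\Delta_p$ and the definition of catastrophicity, replacing $G$ by $TG$ changes neither property, because unimodular transformations preserve both finiteness and infiniteness of input weight. So we may assume the first row satisfies $g_1\equiv \bar q\, h \pmod p$. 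Next, lift $\bar q$ to a monic $q$ using Theorem~\ref{tech}. Then $g_1=q\,h+p\,e$ for polynomial vectors $h,e$. Take the input $u=(1/q,0,\dots,0)$. Since $q$ is monic with $\bar q(0)\neq0$, $1/q$ is a rational series that is not a polynomial modulo $p$, so it has infinite weight. The output is $h+p\,e/q$, and this may still have infinite weight. To remove the tail we multiply by $q^{r-1}$-type corrections. Concretely, consider $u=(q^{-1}\cdot c,0,\dots)$ and iterate. Alternatively, and more cleanly, use the input $\frac{1}{q^{r}}\cdot(\text{stuff})$, working over $\ZZ[D]_{(q)}$ and solving successively modulo $p^j$ for $j=1,\dots,r$. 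At each level $j$, the obstruction must lie in the image of $\overline{G}$ localized at $\bar q$; we argue this via the Smith form of $\overline{G}$ over $\FF[D]$, using that $\bar q$ appears as an invariant factor.

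\emph{If $G$ is catastrophic, then $\overline{G}$ is catastrophic.} Suppose $\vu$ has infinite weight and $\vu G$ has finite weight. Write $\vu=p^{a}\vu'$ with $a$ maximal such that this is possible for all but finitely many coefficients; the finitely many exceptional coefficients can be absorbed into a polynomial part. Then $\overline{\vu'}$ has infinite weight. We have $p^{a}\vu'G$ polynomial, hence $\vu'G$ is polynomial modulo $p^{r-a}$, and in particular $\overline{\vu'}\,\overline{G}$ has finite weight (as $a<r$). So $\overline{G}$ is catastrophic.

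The first direction, the lifting, is where we expect the real work. The plan is to reduce it to a single invariant factor $\bar q$ of $\overline{G}$ and lift it via Theorem~\ref{tech}.
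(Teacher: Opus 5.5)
The paper quotes this result from the literature without proof, so I am judging your argument on its own terms. Your direction ``$G$ catastrophic $\Rightarrow$ $\overline{G}$ catastrophic'' is correct. Choose $a\le r-1$ maximal such that almost all coefficients of $\vu$ lie in $p^a\ZZ$. Then $\vu'$ has infinitely many unit coefficients, and finiteness of $\mathrm{wt}[p^a\vu'G]$ puts almost all coefficients of $\vu'G$ in $p^{r-a}\ZZ\subseteq p\ZZ$.

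The converse direction has a genuine gap, and the strategy you propose cannot be completed. Both of your routes, as you describe them, aim at an input whose reduction modulo $p$ still has infinite weight. One route adds corrections $p\vu_1,p^2\vu_2,\dots$ to a lift of $\overline{\vu}$; the other solves successively modulo $p^j$ for an input of the form $c/q^r$. Such an input need not exist. Take the paper's example $G=(1+D,\ 3+D)$ over $\Z_4$. Here $\overline{G}=(1+D,\ 1+D)$ is catastrophic. But for every $u\in\Z_4((D))$ we have $u(3+D)-u(1+D)=2u$. So if $uG$ has finite weight, so does $2u$, i.e.\ $\overline{u}$ has finite weight. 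Hence your assertion that ``the obstruction must lie in the image of $\overline{G}$ localized at $\bar q$'' is false in general, and no Smith-form argument will supply it.

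The missing idea is to annihilate the error term rather than correct it. Write $\vu_0G=\vv_0+p\,\mathbf{w}_1$ as you did, and use the input $p^{r-1}\vu_0$. It still has infinite weight: at every position where $\overline{\vu}$ is nonzero, its coefficient is $p^{r-1}$ times a unit, which is nonzero in $\ZZ$. Meanwhile $p^{r-1}\vu_0G=p^{r-1}\vv_0$ has finite weight. This settles ``$\overline{G}$ catastrophic $\Rightarrow$ $G$ catastrophic'' in two lines. You need neither the minor criterion over $\FF$, nor unimodular lifts, nor Theorem~\ref{tech}. In your second route, the same observation shows that the input $(p^{r-1}/q,0,\dots,0)$ already yields the polynomial output $p^{r-1}h$.
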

While the previous theorem applies to a specific matrix, the invariance of $\Delta_p(\mathcal{C})$ allows us to characterize the code itself.
\begin{theorem}\rm\medskip 
A free convolutional code $\mathcal{C}$ over $\mathbb{Z}_{p^r}$ is  non-catastrophic if and only if:
\begin{equation}
    \Delta_p(\mathcal{C}) = D^s, \quad \text{for some } s \geq 0.
\end{equation}
 
\end{theorem}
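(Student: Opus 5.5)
The plan is to read ``$\mathcal{C}$ is non-catastrophic'' as ``$\mathcal{C}$ admits at least one non-catastrophic polynomial encoder''. Both directions then go through Theorem~\ref{j12}, which moves the question to the residue field $\mathbb{F}_p$. There the classical criterion of \cite{jl} applies: a full-rank encoder is non-catastrophic if and only if the GCD of its $k\times k$ minors is a power of $D$.

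\emph{Sufficiency.} Assume $\Delta_p(\mathcal{C})=D^s$ and let $G$ be any RIDM for $\mathcal{C}$, which exists by the Algorithm. The GCD of the $k\times k$ minors of $\overline{G}$ is then $D^s\neq 0$. So some $k\times k$ minor of $\overline{G}$ is nonzero, and $\overline{G}$ has full rank over $\mathbb{F}_p$. By the field criterion, $\overline{G}$ is non-catastrophic. By Theorem~\ref{j12}, $G$ itself is non-catastrophic, so $\mathcal{C}$ has a non-catastrophic encoder.

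\emph{Necessity.} Let $G'$ be a non-catastrophic encoder of $\mathcal{C}$. By Theorem~\ref{j12}, $\overline{G'}$ is non-catastrophic over $\mathbb{F}_p$.

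First I need $\overline{G'}$ to have full rank. Suppose it did not. Then there is a nonzero polynomial row $\bar u$ with $\bar u\,\overline{G'}=0$. The input $\bar u/(1-D)$ has infinite weight, but its output is $0$, which has finite weight. This contradicts non-catastrophicity, so $\overline{G'}$ has full rank. The field criterion then says the GCD of the $k\times k$ minors of $\overline{G'}$ is $D^{s'}$ for some $s'$.

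Next I run the Algorithm starting from $G'$, and track one quantity through each pass: the GCD $\delta(\cdot)$ of the projected $k\times k$ minors. I claim each pass keeps $\delta$ a nonzero power of $D$.

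\begin{itemize}
\item A unimodular step $G\mapsto TG$ leaves $\delta$ unchanged. This is the Remark after the definition of $\Delta_p$, since $\det\overline{T}$ is a nonzero constant.
\item A division step replaces one row $g_i$ of $TG$ by $g_i/Q$, where $Q$ is an irreducible divisor of a regular polynomial. Every $k\times k$ minor of the new matrix times $Q$ equals the corresponding minor of $TG$. Projecting, $\overline{Q}\cdot\delta(G_{\rm new})$ divides $\delta(TG)=D^{t}$, up to a scalar.
\end{itemize}

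The key point in the division step is that $\overline{Q}\neq 0$. This comes from Theorem~\ref{tech}: $Q$ is associated to a monic polynomial with the same projection, so $\overline{Q}$ is a nonzero polynomial. Hence $\delta(G_{\rm new})$ divides $D^t$, so it is itself a power of $D$.

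The Algorithm terminates at a RIDM $G$ equivalent to $G'$. By induction on the number of passes, $\Delta_p(G)=D^s$ for some $s$. By Theorem~\ref{imp}, $\Delta_p(\mathcal{C})=\Delta_p(G)=D^s$.

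The step I expect to be most delicate is this division step. One must check that $\overline{Q}\neq0$ and that the relation on minors really descends to $\mathbb{F}_p[D]$. The latter needs care: division by $Q$ is exact in $\mathbb{Z}_{p^r}[D]$ because $Q$ is regular, so the relation holds over the ring before projecting. I would handle both points through the factorization $Q=P_1P_2$ of Theorem~\ref{tech}, with $P_1$ monic and $P_2$ a unit. The full-rank check for $\overline{G'}$ is a secondary point, and the explicit input $\bar u/(1-D)$ settles it.
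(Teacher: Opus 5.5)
Your proof is correct. Its core matches what the paper intends. The paper writes no separate proof, only the remark that Theorem~\ref{j12} together with the invariance of $\Delta_p$ characterizes the code: for a RIDM $G$, $G$ is non-catastrophic $\iff$ $\overline{G}$ is non-catastrophic $\iff$ $\Delta_p(G)=D^s$ by the field criterion. Your sufficiency direction is exactly this.

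Where you go beyond the paper is necessity. Theorem~\ref{imp} only compares RIDMs with one another. So the paper's sketch does not exclude a non-catastrophic encoder that is not a RIDM when $\Delta_p(\mathcal{C})\neq D^s$. The paper asserts after the theorem that every polynomial encoder is then catastrophic, but never justifies it. Your tracking of $\delta$ through the Algorithm supplies exactly this step. Equivalently, it writes $G'=MG$ with $M$ a polynomial matrix, namely a product of the polynomial inverses $T^{-1}$ and of matrices $\mathrm{diag}(1,\dots,Q,\dots,1)$. Then $\delta(G')=\det\overline{M}\cdot\delta(G)$ up to a scalar, with $\det\overline{M}\neq 0$. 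Hence $\delta(G)$ divides $\delta(G')$, and Theorem~\ref{imp} transfers the conclusion to every RIDM. So your route actually proves the encoder-by-encoder statement that the paper only claims.

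Two minor points. First, in the division step you have the equality $\delta(TG)=\overline{Q}\,\delta(G_{\rm new})$ up to a nonzero scalar, not just divisibility. Also $\overline{Q}\neq 0$ follows directly from $Q\mid P$ with $\overline{P}\neq 0$, so Theorem~\ref{tech} is not needed there. Second, in the full-rank check, $\bar u/(1-D)$ can have finite weight if $1-D$ divides every entry of $\bar u$. Choose $\bar u$ polynomial with entries of GCD $1$, which is possible since the left kernel of $\overline{G'}$ is an $\mathbb{F}_p(D)$-subspace. Then some entry of $\bar u/(1-D)$ is a non-polynomial rational function, hence of infinite weight.
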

If $\Delta_p(\mathcal{C})$ contains any irreducible factor $Q(D) \in \mathbb{F}_p[D]$ with $Q(D) \neq D$, then every polynomial generator matrix of $\mathcal{C}$ is catastrophic. This identifies codes that are  unsuitable for practical applications due to infinite error propagation.
\begin{remark}\rm\medskip For convolutional codes over a field $\mathbb{F}_p$, every RIDM is a basic matrix, meaning its $k \times k$ minors are relatively prime \cite{rj}. Thus, $\Delta_p(\mathcal{C}) = 1$ always. The fact that $\Delta_p(\mathcal{C})$ can be different from $1$ (or $D^s$) over $\mathbb{Z}_{p^r}$ highlights the profound structural differences introduced by rings.
\end{remark}

\section{Duality and Catastrophicity Invariants}

Let $\mathcal{C}$ be a free convolutional code over $\mathbb{Z}_{p^r}$ with rank $k$ and length $n$. The dual code $\mathcal{C}^\perp$ is defined as:
\begin{equation}\label{e}
    \mathcal{C}^\perp = \{ \mathbf{v}(D) \in \mathbb{Z}_{p^r}^n[D] \mid \mathbf{v}(D) \cdot \mathbf{w}(D)^T = 0, \forall \mathbf{w}(D) \in \mathcal{C} \}.
\end{equation}

Since $\mathcal{C}$ is free, it admits a generator matrix $G(D) \in \mathbb{Z}_{p^r}^{k \times n}[D]$. Correspondingly, its dual $\mathcal{C}^\perp$ is a free code of rank $n-k$, admitting a generator matrix $H(D) \in \mathbb{Z}_{p^r}^{(n-k) \times n}[D]$, referred to as the parity-check matrix of $\mathcal{C}$ \cite{en}. These matrices satisfy the orthogonality relation:
\begin{equation}\label{e1}
    G(D) \cdot H(D)^T = \mathbf{0}_{k \times (n-k)}.
\end{equation}

\begin{theorem}\rm\medskip
Let $\mathcal{C}$ be a free convolutional code over $\mathbb{Z}_{p^r}$ and $\mathcal{C}^\perp$ be its dual code. Then:
\begin{equation}
    \Delta_p(\mathcal{C}) = \Delta_p(\mathcal{C}^\perp)
\end{equation}
\end{theorem}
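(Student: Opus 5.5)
The plan is to compare the $k\times k$ minors of a RIDM of $\mathcal{C}$ with the $(n-k)\times(n-k)$ minors of a RIDM of $\mathcal{C}^\perp$ over $\ZZ[D]$. I will show they agree up to sign and a common unit factor, and then reduce modulo $p$. Fix a RIDM $G(D)$ for $\mathcal{C}$ and a RIDM $H(D)$ for $\mathcal{C}^\perp$; both exist by the Algorithm. By Proposition~\ref{PP}, the GCD of the $k\times k$ minors of $G$ is $1$, and so is the GCD of the $(n-k)\times(n-k)$ minors of $H$. For an index set $I\subset\{1,\dots,n\}$ with $|I|=k$, write $m_I(G)$ for the corresponding minor of $G$ and $I^c$ for its complement.

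\textbf{Step 1.} First I would show that the row module of $G$ is a direct summand over $R=\ZZ(D)$, the localization of $\ZZ[D]$ at its regular elements. The ring $R$ is local and artinian, with maximal ideal $pR$, and it is self-injective (quasi-Frobenius). In particular, free $R$-modules are injective. The row module $G R^k$ is free of rank $k$, so it splits off. Hence there is $V\in GL_n(R)$ whose first $k$ rows are $G$, that is $G=[I_k\ 0]V$.

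Put $U=V^{-1}$. The orthogonal complement of the row space of $G$ in $R^n$ is spanned by the rows of $[0\ I_{n-k}]U^T$. Since $H$ generates $\mathcal{C}^\perp$ and is free of rank $n-k$, we get $H=N[0\ I_{n-k}]U^T$ for some $N\in GL_{n-k}(R)$. This step, identifying $\mathcal{C}^\perp$ over $R$ exactly with the complement coming from $U$, is the main obstacle. One has to check that the paper's definition (\ref{e}) of $\mathcal{C}^\perp$ and the relation (\ref{e1}) really give the full complement after extending scalars, not just a submodule of it. I would use freeness of $\mathcal{C}^\perp$ of rank $n-k$, together with the fact that over the quasi-Frobenius ring $R$ the complement of a rank-$k$ summand is free of rank $n-k$. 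Together these force $N$ to be invertible.

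\textbf{Step 2.} Apply Jacobi's complementary minor identity, which holds over any commutative ring, to $V$ and $U=V^{-1}$. Up to a sign $\epsilon_I=\pm1$ depending only on $I$, the minor of the first $k$ rows of $V$ in columns $I$ equals $\det V$ times the minor of the last $n-k$ columns of $U$ in rows $I^c$. Transposing and multiplying by $N$ yields
\[
m_I(G)=\epsilon_I\,\alpha(D)\,m_{I^c}(H)\quad\text{for all } I,\qquad \alpha=\det V\,(\det N)^{-1}\in R^{\times}.
\]
Write $\alpha=a/b$ with $a,b\in\ZZ[D]$ regular, so that $b\,m_I(G)=\pm a\,m_{I^c}(H)$. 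The minors on each side have GCD $1$. Arguing as in the proof of Theorem~\ref{imp}, I would conclude that $a$ and $b$ divide each other up to units, so $\alpha$ is a unit of $\ZZ[D]$. To make this precise, I would factor $a$ and $b$ with Theorem~\ref{tech} and cancel common monic factors, exactly as is done for $\det M$ in Proposition~\ref{PP}.

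\textbf{Step 3.} Reduce modulo $p$. A unit of $\ZZ[D]$ projects to a unit of $\FF[D]$, namely a nonzero constant $\overline\alpha$. Therefore $m_I(\overline G)=\pm\overline\alpha\,m_{I^c}(\overline H)$ for every $I$. As $I$ runs over all $k$-subsets, $I^c$ runs over all $(n-k)$-subsets, so the two families of minors coincide up to nonzero scalars. Their monic GCDs are therefore equal, giving $\Delta_p(G)=\Delta_p(H)$. By Theorem~\ref{imp}, this means $\Delta_p(\mathcal{C})=\Delta_p(\mathcal{C}^\perp)$.
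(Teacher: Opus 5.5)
Your Step 1 and the Jacobi computation in Step 2 are fine. The point you flag as the main obstacle is actually easy. Since the rows of $H$ form a basis, $\overline{H}=\overline{N}\,\overline{[0\ I_{n-k}]U^T}$ has rank $n-k$ over $\FF(D)$. Hence $\overline{N}$ is invertible, and $N\in GL_{n-k}(R)$ because $R$ is local with maximal ideal $pR$.

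\textbf{The gap is the last part of Step 2.} The claim that $\alpha$ is a unit of $\ZZ[D]$ is false. Cancelling monic factors cannot work, because $\ZZ[D]$ has no unique factorization: in $\Z_4[D]$ one has $D\cdot D=(D+2)(D+2)$. Over $\Z_4$, take $G=(D,\ D+2)$ and $H=(D,\ 3D+2)$. Then $GH^T=0$ and $\overline{G}=\overline{H}=(D,\ D)$. Neither $G$ nor $H$ has entries with a common non-unit divisor. Both are genuine RIDMs: every polynomial generator of $\C$, resp.\ $\C^\perp$, has the form $\beta\,(1,\ \pm1+2/D)$ with $\beta(0)$ even, hence has degree at least $1$. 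Your relations read $D+2=\epsilon\,\alpha D$ and $D=\epsilon'\alpha(3D+2)$, so $\alpha=\pm(D+2)/D\notin\ZZ[D]$. The step of Theorem~\ref{imp} you propose to imitate fails in the same way: $(D+2,\ D)=\frac{D+2}{D}(D,\ D+2)$ are two RIDMs of one code with $\det M\notin\ZZ[D]$.

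What Step 3 needs is only $\overline{\alpha}\in\FF^{\times}$. That does hold in the example above, but it cannot be derived from coprimality of the minors, which is all your argument uses. With the same $G$, the matrix $H'=(3D^2+2D+2,\ D^2+2)$ also generates $\C^\perp$, and its entries have no common non-unit divisor. Yet $\alpha=\pm D/(D^2+2)$, so $\overline{\alpha}=1/D$, and the GCD of the entries of $\overline{H'}=(D^2,\ D^2)$ is $D^2\neq D$.

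So a correct completion must use the minimality of the internal degree of $G$ and $H$ themselves, not just the GCD property. For instance, you would need to show that a non-constant factor of $\overline{\alpha}$ or $\overline{\alpha}^{-1}$ lets you lower the internal degree of an encoder of $\C$ or of $\C^\perp$. That is the missing idea.

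For comparison, the paper argues directly over $\FF$ and asserts that complementary minors of $\overline{G}$ and $\overline{H}$ agree up to a scalar. Over $\FF(D)$ the proportionality factor is a priori a rational function, equal up to a constant to $\Delta_p(G)/\Delta_p(H)$. So the constancy of that factor is exactly the point to be proved there as well. Your lift to $R$ keeps more information and is a better framework, but as written it does not close this step.
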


\begin{proof}
Let $G(D)$ and $H(D)$ be RIDM generator matrices for $\mathcal{C}$ and $\mathcal{C}^\perp$, respectively. By the orthogonality property (\ref{e1}), the projection of these matrices onto the residue field satisfies $\overline{G}(D) \cdot \overline{H}(D)^T = 0$ over $\mathbb{F}_p(D)$.

According to the Theorem on Minors of the Inverse, as discussed by Gantmacher \cite{ga} and McEliece \cite{rj}, there exists a proportional relationship between the $k$-order minors of $\overline{G}$ and the $(n-k)$-order minors of $\overline{H}$. Specifically, each minor $\overline{m}_i$ of order $k$ from $\overline{G}$ is equal, up to  a scalar factor, to the complementary $(n-k)$-order minor of $\overline{H}$.
Consequently, we have $\Delta_p(G)=\Delta_p(H)$, and  thus $\Delta_p(\mathcal{C}) = \Delta_p(\mathcal{C}^\perp)$.
\end{proof}
\begin{example}\rm\medskip Consider the free convolutional code over $\Z_4$ with encoder 
$$G(D)=\begin{pmatrix}
1+D & 1+D & 2\\
2+D & 1 & 1+D 
\end{pmatrix}
.$$
As established previously, $G(D)$ is a RIDM for $\mathcal{C}$ with $\Delta_p(\mathcal{C}) = 1+D^2$, indicating that $\mathcal{C}$ is catastrophic. A parity-check matrix for $\mathcal{C}$ is given by:
\begin{equation*}
H(D) = \begin{pmatrix} 3+2D+D^2 & 3+3D^2 & 3+2D+3D^2 \end{pmatrix},
\end{equation*}
which is also a RIDM encoder for $\mathcal{C}^\perp$. Reducing modulo $p$, we obtain $$\overline{H}(D)=\begin{pmatrix} 1+D^2 & 1+D^2 & 1+D^2\end{pmatrix}.$$
 The GCD of the $1 \times 1$ minors is $\Delta_p(\mathcal{C}^\perp) = 1+D^2$, confirming that $$\Delta_p(\mathcal{C}^\perp) = \Delta_p(\mathcal{C}).$$
\end{example}

\begin{cor}\rm\medskip
 A code $\mathcal{C}$ over $\mathbb{Z}_{p^r}$ is catastrophic if and only if its dual $\mathcal{C}^\perp$ is catastrophic.
 \end{cor}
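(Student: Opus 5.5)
The corollary follows by chaining the two main results of this section: the characterization of intrinsic catastrophicity (a free code $\C$ over $\ZZ$ is non-catastrophic if and only if $\Delta_p(\C)=D^s$ for some $s\geq 0$) and the duality theorem $\Delta_p(\C)=\Delta_p(\C^\perp)$. Throughout, "catastrophic code" is read in the sense fixed earlier: a code is catastrophic when every polynomial encoder of it is catastrophic, and non-catastrophic when it admits at least one non-catastrophic encoder.

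First I check that both characterizations apply to both codes. By hypothesis $\C$ is free of rank $k$. As recalled at the start of this section, $\C^\perp$ is free of rank $n-k$ and has a polynomial generator matrix $H(D)$. The construction algorithm of Section 2 then produces an RIDM encoder for each code. By Theorem~\ref{imp}, $\Delta_p(\C)$ and $\Delta_p(\C^\perp)$ are well defined, so both are legitimate inputs to the catastrophicity criterion.

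The core argument is then a chain of equivalences:
$$\C \text{ non-catastrophic} \iff \Delta_p(\C)=D^s \iff \Delta_p(\C^\perp)=D^s \iff \C^\perp \text{ non-catastrophic}.$$
The middle equivalence is the duality theorem, which gives the same $s$ on both sides. The outer equivalences are the catastrophicity criterion applied first to $\C$ and then to $\C^\perp$. Taking negations of both ends yields the stated equivalence for catastrophicity.

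The only real point of care is the first step. The catastrophicity criterion must be applicable to $\C^\perp$, which requires $\C^\perp$ to be free and realized inside $\ZZ((D))^n$ by a full-rank polynomial encoder. The definition of $\C^\perp$ in (\ref{e}) is phrased with polynomial vectors, so I would remark that passing to the $\ZZ((D))$-span of $H(D)$ yields the convolutional code to which the earlier definitions apply. Once the paper's assertion that $\C^\perp$ is free with parity-check matrix $H(D)$ is granted, the rest is immediate.
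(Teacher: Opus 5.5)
Your proposal is correct and follows the paper's intended route: the corollary comes with no separate proof and is meant to follow by chaining the catastrophicity criterion ($\mathcal{C}$ non-catastrophic if and only if $\Delta_p(\mathcal{C})=D^s$) with the duality theorem $\Delta_p(\mathcal{C})=\Delta_p(\mathcal{C}^\perp)$, applied to both $\mathcal{C}$ and $\mathcal{C}^\perp$, which is exactly your chain of equivalences. Your remark that $\mathcal{C}^\perp$ should be taken as the $\mathbb{Z}_{p^r}((D))$-span of $H(D)$, so that the criterion applies to it, usefully clarifies the paper's definition of the dual in terms of polynomial vectors.
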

 
 This result highlights that catastrophicity is a structural property for the code that remains invariant under duality

\section{conclusion}
In this paper, we introduced a new structural invariant for free convolutional codes over the finite  ring $\mathbb{Z}_{p^r}$, referred to as   the Residual Structural Polynomial $\Delta_p(C)$. By utilizing the framework of Reduced Internal Degree Matrices (RIDM), we formally proved that this polynomial is an intrinsic characteristic of the code, and does not depend on the choice of a RIDM as  encoder. We established that $\Delta_p(\mathcal{C})$ provides a necessary and sufficient algebraic condition for catastrophicity: a free code over $\mathbb{Z}_{p^r}$ admits a non-catastrophic realization if and only if its residual polynomial is of the form $D^s$ for some $s \geq 0$. Furthermore, our Duality Theorem proves that $\Delta_p(C) = \Delta_p(C^\perp)$, revealing a fundamental symmetry where the catastrophicity  is  preserved under orthogonality.\\
 Future research will explore the extension of this invariant to non-free codes and investigate the explicit relationship between the irreducible factors of $\Delta_p(C)$ and the specific structure properties of catastrophic sequences.
 
\section*{Declaration of Generative AI and AI-assisted technologies in the writing process}

The author used Google's Gemini tool in preparing this work to improve the quality of the English language, writing style, and the structure of the abstract and introduction.

After using this tool, the author reviewed and corrected the content as needed and assumes full responsibility for the scientific content of the published article.

\section*{Declaration of competing interest}
The author declares that there is no conflict of interest between the funders or individuals who may influence the current work in this article.

\end{document}